\documentclass[a4paper,12pt]{article}
\usepackage{localeng}
\usepackage[margin=25mm]{geometry}
\DeclareMathOperator{\KS}{\mathrm{C}}
\DeclareMathOperator{\KP}{\mathrm{K}}
\newcommand{\cnd}{{\mskip 1mu | \mskip 1mu}} 
\newtheorem{proposition}{\textsc{Proposition}}
\theoremstyle{definition}
\newtheorem*{definition}{\textsc{Definition}}
\theoremstyle{remark}
\newtheorem{lemma}{\textsc{Lemma}}
\newtheorem{question}{\textsc{Question}}

\title{All Kolmogorov complexity functions are optimal,\\ but are some more optimal?}
\author{Bruno Bauwens\thanks{
    HSE university, Moscow, Russia. Supported by a grant from the Russian Science Foundation No. 20-11-20203, \protect\url{https://rscf.ru/project/20-11-20203/}
    }, 
  Alexander Kozachinskiy\thanks{
    CENIA, Santiago, Chile.
    Supported by the ANID Fondecyt Iniciación grant 11250060 and National Center for Artifcial Intelligence CENIA FB210017, Basal ANID.
    },
  Alexander Shen\thanks{LIRMM, Univ Montpellier, CNRS, Montpellier, France. Supported by ANR-21-CE48-0023 FLITTLA grant}
  }

\begin{document}
\init 
\maketitle

\begin{flushright}
 \emph{To Yuri Gurevich on the occasion of his anniversary}
\end{flushright}

\begin{abstract}
    Kolmogorov  (1965) defined the complexity of a string $x$ as the minimal length of a program generating $x$. Obviously this definition depends on the choice of the programming language. Kolmogorov noted that there exist \emph{optimal} programming languages that make the complexity function minimal up to $O(1)$ additive terms, and we should take one of them --- but which one?
    
Is there a chance to agree on some specific programming language in this definition? Or at least should we add some other requirements to optimality? What can we achieve in this way?      

In this paper we discuss different suggestions of this type that appeared since 1965, specifically a stronger requirement of universality (and show that in many cases this does not change the set of complexity functions).

\end{abstract}

\section{Kolmogorov complexity}

Algorithmic (Kolmogorov) complexity of a finite object is defined as the minimal length of a program producing this object. Obviously this definition depends on the programming language, so we need to fix the language somehow. More formally, let $U$ be a partial computable function whose arguments and values are strings (one could consider $U$ as an interpreter of some programming language, or just as a decompressor). The complexity of a binary string $x$ with respect to $U$ is defined as
\[
\KS_U(x) = \min\{ |p|\colon U(p)=x\},
\]
where $|p|$ is the length of binary string $p$. Here the strings $p$ such that $U(p)=x$ can be called ``programs'' or ``descriptions'' for $x$.
 
Kolmogorov observed~\cite{kolmogorov1965} that there exist \emph{optimal} $U$ that make $\KS_U$ minimal up to an additive $O(1)$-term. But there are many optimal algorithms $U$ that lead to different functions $\KS_U$. They differ by an $O(1)$ additive term (due to optimality). Which one should we choose to call it \emph{the} Kolmogorov complexity function? Kolmogorov says \cite[p.~10]{kolmogorov1965} that 
\begin{quote}
  \ldots it is doubtful that such a choice can be done in a non-arbitrary way. One could think, however, that different possible ``reasonable'' candidates will lead to complexity functions that differ by hundreds of bits, not tens of thousands of bits. So the quantities like complexity of the \emph{War and Peace} text are well defined from the practical viewpoint.
\end{quote}

Kolmogorov actually defined \emph{conditional complexity} $\KS_U(x\cnd y)$. In this setting $U$ has two arguments: the first argument is considered as a program, and the second one is considered as an input. Then $\KS_U(x\cnd y)$ is defined as
\[
\KS_U(x\cnd y) = \min\{ |p|\colon U(p,y)=x\}.
\]
Again there is an optimal $U$ that makes the function $\KS_U$ minimal (as a function of two arguments) up to $O(1)$. See the details, e.g., in~\cite{suv}. For simplicity we discuss the non-conditional setting when possible.

\section{Kolmogorov optimality or Solomonoff universality?}

Kolmogorov was not the first one who considered the notion of optimal encoding: Ray Solomonoff considered this notion earlier in the framework of inductive inference theory. He published some technical reports in the beginning of 1960s, and in 1964 he published a paper~\cite{solomonoff1964-1} where essentially the same optimality result was proven. For that, Solomonoff introduced the notion of a \emph{universal} machine: a machine $U$ is universal if for every machine $V$ there exists a string $v$ such that $U(vx)\simeq V(x)$ for all strings $x$, i.e., $U(vx)$ and $V(x)$ are defined for the same strings $x$ and are equal if defined. In other words, universal algorithm $U$ may simulate any other algorithm, one needs just to add some prefix. Solomonoff explains it as follows:

\begin{quote}
More exactly, suppose $M_2$ is an arbitrary computing machine, and $M_2~(x)$ is the output of $M_2$, for input string $x$. Then if $M_1$ is a ``universal machine," there exists some string, $\alpha$ (which is a function of $M_1$ and $M_2$, but not of $x$), such that for any string, $x$,
\[
M_1(\overline{\alpha x}) = M_2(x)
\]
$\alpha$ may be viewed as the "translation instructions" from $M_1$ to $M_2$. Here the notation $\overline{\alpha x}$ indicates the concatenation of string $\alpha$ and string $x$. \cite[p.~7]{solomonoff1964-1} 
\end{quote}
It is easy to see that every Solomonoff universal algorithm is optimal in the sense of Kolmogorov. However, the reverse is not true: one can easily construct a Kolmogorov optimal algorithm that is defined only on strings of even length (just prepend the program by $1$ or $01$ to make its length even), and for a Solomonoff-universal algorithm this is not possible, since to simulate a total $V$ the universal algorithm should be defined on all extensions of $v$.  

Therefore, if we decide to choose some $U$ arbitrarily and then fix it and let $\KS(x) = \KS_U(x)$, there are still two options: to choose any (Kolmogorov) optimal $U$ or to insist that $U$ should be not only optimal but also (Solomonoff) universal.  It is interesting to see how different authors make this choice. Chaitin in his first paper~\cite{chaitin1966} does not consider optimality at all (and does not define the notion of algorithmic complexity as it is known now); instead, he considers some parameters of a specific computation model (Turing machines). In his second paper~\cite{chaitin1969} he gives a specific construction of an optimal machine $M^*$ based on a Turing universal machine and  a specific encoding of programs, and notes that this machine is optimal:
\begin{quote}
(c)~For any binary computer $M$ there exists a constant $c$ such that for all finite binary sequences $S$, $L_{M^*}(S)\le L_M(S)+c$. \cite[p.~156]{chaitin1969}
\end{quote}
(Chaitin uses letter $L$ instead of our $\KS$.)

Li and Vit\'anyi in their famous textbook~\cite{li-vitanyi2019} define complexity in terms of optimality. Lemma 2.1.1~\cite[p.104]{li-vitanyi2019} says ``There is an additively optimal universal partial computable function'', but the word ``universal'' has no precise definition in the book and refers to the construction of this optimal function. In Definition 2.1.2 (p.~106) the authors say ``Fix an additively optimal universal $\phi_0$'' (again without defining universality in a precise way). One can also note that in the 2nd edition (1997) the word ``universal'' was formally introduced as a synonym for ``optimal'' in Definition 2.0.1 (p.~95) : ``A function $f$ is \emph{universal} (or \emph{additively optimal}) if\ldots''.

Downey and Hirschfeldt~\cite{downey-hirschfeldt2010}, on the other hand, define Kolmogorov complexity using universal algorithms. They write:
\begin{quotation}
We would like to get rid of the dependence on $f$ by choosing a \emph{universal} description system. Such a system should be able to simulate any other description system with at most a small increase in the length of descriptions. Thus we fix a partial computable function $U \colon 2^{<\omega} \to 2^{<\omega}$ that is \emph{universal} in the sense that, for each partial computable function $f \colon 2^{<\omega} \to 2^{<\omega}$, there is a string $\rho_f$ such that
\[
\forall \sigma [U(\rho_f\sigma) = f(\sigma)].
\]

[\textsc{Footnote}:] Such a function, and the corresponding Turing machine, are sometimes called \emph{universal by adjunction}, to distinguish them from a more general notion of universality, where we say that $U$ is universal if for every partial computable $f \colon 2^{<\omega} \to 2^{<\omega}$ there is a $c$ such that for each $\sigma\in\mathrm{dom}(f)$ there is a $\tau$ with $|\tau|\le |\sigma|+c$ and $U(\tau)=f(\sigma)$. The distinction between these two notions is irrelevant in most cases, but not always, and it should be kept in mind that it is the stronger one that we adopt.
\end{quotation}
So they do make the distinction between optimality (called ``a more general notion of universality'') and universality and choose the stronger requirement (but, as it seems, do not say explicitly when this distinction really matters).

Andre Nies in his textbook~\cite{nies2009} defined optimality (Definition 2.1.1, p.~76) in a standard way, and then writes:
\begin{quote}
An optimal machine exists since there is an effective listing of all the machines. The particular choice of an optimal machine is usually irrelevant; most of the inequalities in the theory of string complexity only hold up to an additive constant, and if $R$ and $S$ are optimal machines then $\forall x\, C_R(x)=^+ C_S(x)$. Nonetheless, we will specify a particularly well-behaved optimal machine. Recall the effective listing $(\Phi_e)_{e\in\mathbb{N}}$ of partial computable functions from (1.1) on page 3. We will from now on assume that $\Phi_1$ is the identity.

\textbf{2.1.2 Definition}. The \emph{standard optimal plain machine} $\mathbb{V}$ is given by letting 
\[
\mathbb{V}(0^{e-1}1\rho)\simeq \Phi_e(\rho)
\]
for each $e>0$ and $\rho\in\{0,1\}^*$.
\end{quote}
Still it is not clear whether indeed some of the (non-technical) results in the book depend on this stronger assumption and are not easily extended to all optimal machines.

\section{How invariant are the theorems?}

As we have seen, different authors use technically different definitions for Kolmogorov complexity. Is it important? Are there some results that are valid for one definition but not for the other?

First let us stress again (see Nies' quote above) that many results about Kolmogorov complexity are \emph{$O(1)$-robust} in the sense that changing the complexity function by $O(1)$ additive terms does not change anything. For example, when we claim that
\[
\KS(x,y) \le \KS(x)+\KS(y)+O(\log n)
\]
for all strings $x,y$ of length at most $n$, the statement remains valid if we change $\KS(x)$ by some bounded additive term. Indeed, the $O(\log n)$ term absorbs this change. For $O(1)$-robust results it obviously does not matter whether we consider optimal or universal algorithms in the definition of complexity (and which optimal or universal algorithm we choose).

Some standard results are not $O(1)$-robust in their standard form but the proof essentially establishes some stronger $O(1)$-robust statement. For example, one of the first results about Kolmogorov complexity says that Kolmogorov complexity function is non-computable, and computability is not an $O(1)$-robust property.  However, the proof establishes that every function that $O(1)$-approximates complexity is non-computable (and, moreover, that every computable partial lower bound for complexity is bounded) --- and these stronger statements are $O(1)$-robust.

In some cases the statement is not $O(1)$-robust and there is no stronger $O(1)$-robust statement. For example, for every $n$ there exists an $n$-bit string such that $\KS(x)\ge n$, and this is not true for function $\KS(x)-c$ for large enough $c$. In this example the statement is valid for every $\KS_U$ (for every optimal~$U$, and even for a non-optimal $U$). Also one may note that we often do not need such a precision when we apply this statement. Usually it is enough to know that for some $c$ and all $n$ there is a string of length $n$ that has complexity at least $n-c$, and this statement is $O(1)$-robust.

There are some cases where the choice of a universal (or optimal) machine is essential and therefore there is no $O(1)$-robust replacement. There is a very convincing example for prefix complexity: An.~Muchnik and S.~Positselsky~\cite[p.~29, Theorems 2.6 and 2.7]{muchnik-positselsky2002} have shown that the overgraph of the prefix complexity is $tt$-complete for some prefix universal functions and is not $tt$-complete for some others\footnote{Prefix complexity is obtained when we require additionally that the domain of the algorithm~$U$ is prefix-free (does not contain at the same time a string and its prefix). See below or~\cite{suv} for details.}. For the standard (plain) version there are trivial examples of this type: one may easily show that for some universal $U$ the function $\KS_U$ takes only even values, while for other universal $U$ the function $\KS_U$ takes only odd values.

\begin{question}
Are there more interesting examples of this type (for plain complexity)?
\end{question}

\section{Plain complexity: universality vs optimality}

The following proposition shows that for plain complexity all the results valid for $\KS_U$ with (arbitrary) universal $U$ remain valid for all optimal algorithms~$U$. Let us recall the definitions.

\begin{definition}
  A machine $U$ is {\em optimal} if for every other machine $V$ there exits a constant~$c_V$ such that $\KS_U(x) \le \KS_V(x) + c_V$ for all $x$.
\end{definition}

\begin{definition}
  A machine $U$ is {\em universal} if for every other machine $V$ there exits a string $v$ such that $U(vp)\simeq  V(p)$for all $p$. 
\end{definition}

(The sign $\simeq$ means that either both quantities are undefined or both are defined and equal.) 

It turns out that the classes of functions $\KS_U$ for all optimal $U$ and for all universal $U$ are the same, even if the class of optimal algorithms is bigger. 

\begin{proposition}\label{optimal2universal}
  Let $D$ be an optimal algorithm. Then there exists some universal algorithm $U$ such that $\KS_U(x)=\KS_D(x)$ for all strings $x$.
\end{proposition}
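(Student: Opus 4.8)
The plan is to build $U$ by splitting its program space into a small ``universality region'' and a large ``reproduction region''. Fix an effective list $(V_e)_{e\in\mathbb N}$ of all machines, a self-delimiting encoding $\langle e\rangle=0^e1$, and a constant $K$ to be pinned down at the very end (depending on $D$). On an input of the form $p=1^K0\langle e\rangle\sigma$ the machine $U$ performs a \emph{guarded simulation} of $V_e$ on $\sigma$: it runs $V_e(\sigma)$, and if this halts with value $y$ it then dovetails $D$ over all programs of length at most $|p|$, outputting $y$ only once it has confirmed $\KS_D(y)\le|p|$ (possible since $\KS_D$ is upper semicomputable). On every other input --- i.e.\ on the set $\mathcal S$ of strings not beginning with $1^K0$ --- the machine $U$ \emph{reproduces} $\KS_D$: using the standard stagewise construction of a machine with a prescribed upper-semicomputable complexity function, but drawing programs only from $\mathcal S$, one arranges that for every $x$ some string of $\mathcal S$ of length exactly $\KS_D(x)$ is a $U$-program for $x$, and no shorter string is.

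Granting that both regions can be realized, the remaining checks are routine. \emph{Universality:} every machine $V$ has arbitrarily large indices $e$, so pick one with $|1^K0\langle e\rangle|\ge c_V$, where $c_V$ is the optimality constant ($\KS_D\le\KS_V+c_V$); then for any $\sigma$ with $V(\sigma)$ defined we get $\KS_D(V(\sigma))\le\KS_V(V(\sigma))+c_V\le|\sigma|+c_V\le|1^K0\langle e\rangle\sigma|$, so the guard eventually passes and $U(1^K0\langle e\rangle\sigma)=V(\sigma)$, while if $V(\sigma)$ is undefined then so is $U(1^K0\langle e\rangle\sigma)$; hence $U(1^K0\langle e\rangle p)\simeq V(p)$ for all $p$. \emph{$\KS_U\ge\KS_D$:} every $U$-program for $x$ is either a guarded-simulation program, whose length is $\ge\KS_D(x)$ precisely because the guard checked this, or a reproduction program, whose length is $\ge\KS_D(x)$ by construction. \emph{$\KS_U\le\KS_D$:} the reproduction region supplies, for each $x$, a program of length $\KS_D(x)$.

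The hard part will be showing that the reproduction region \emph{has enough room}. At length $\ell$ there are $2^{\ell}-2^{\ell-K-1}$ strings in $\mathcal S$, and the stagewise construction fits inside them only if $\KS_D$ does not fill up program space too fast, i.e.\ only if $|\{x:\KS_D(x)\le\ell\}|\le(1-\varepsilon)\,2^{\ell+1}$ for some fixed $\varepsilon>0$ and all large $\ell$ (one then chooses $K$ with $2^{-K-1}<\varepsilon$). I expect this density estimate to be the real content. It must hold for optimal $D$, since any universal machine --- and hence, via $\KS_U=\KS_D$, the function $\KS_D$ itself --- maps a whole cylinder $v\{0,1\}^*$ onto a single string (e.g.\ by simulating the machine that outputs the empty string on every input), and so ``wastes'' a fixed positive fraction of the programs of every length. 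The task is to derive this directly from optimality of $D$ --- presumably by locating such redundancy in $D$, or in some machine whose complexity function is $O(1)$-equal to $\KS_D$ --- and then to carry the stagewise bookkeeping through with this slack in hand; the guarded simulation and the stagewise construction themselves should then be only a tedious exercise.
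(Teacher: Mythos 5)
There is a genuine gap, and it sits exactly where you suspect: the claim that the reproduction region has enough room. You defer this as a lemma, but the justification you sketch is circular --- you invoke ``via $\KS_U=\KS_D$'' a property of universal machines, which is the proposition being proved; optimality only gives $\KS_D\le\KS_U+O(1)$, and the density estimate does not survive an additive shift (a bound $(1-\varepsilon)2^{\ell+c+1}$ is vacuous). Worse, the estimate you propose to prove, $|\{x:\KS_D(x)\le\ell\}|\le(1-\varepsilon)2^{\ell+1}$, would not suffice even if established. The stagewise construction must issue a program of length exactly $\ell$ from $\mathcal S$ for every $x$ whose decreasing $D$-approximation \emph{passes through} the value $\ell$ (you cannot wait to see whether $\ell$ is final, and you cannot substitute a longer or shorter program without breaking $\KS_U=\KS_D$). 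The number of such requests at length $\ell$ is bounded by the number of length-$\ell$ strings in the domain of $D$ and can be as large as $2^\ell$ --- e.g.\ for the optimal machine $D(0p)=D_0(p)$, $D(1p)=p$, essentially every string of length $\ell-1$ requests a slot at length $\ell$ via the instantly-discovered program $1p$, and the $0p$-programs can contribute up to $2^{\ell-1}$ more --- whereas $\mathcal S$ contains only $(1-2^{-K-1})2^\ell$ strings of length $\ell$. So the construction as described can actually run out of space for some optimal $D$; the relevant quantity is governed by programs of length exactly $\ell$, not by the count of strings of complexity at most $\ell$, and no choice of $K$ rescues it.

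The paper's proof avoids rebuilding $\KS_D$ altogether, and the contrast shows where your plan went astray. One first proves that for large enough $d$ \emph{no string of the form $0^dx$ is ever a shortest $D$-description}: by optimality $\KS_D(0^dx)\le|0^dx|-d+2\log d+O(1)$, hence $\KS_D(D(0^dx))\le|0^dx|-d+2\log d+O(1)<|0^dx|$. One then keeps $D$ unchanged outside the cylinder $0^d\{0,1\}^*$ --- so all shortest descriptions, and hence $\KS_D$ exactly, are preserved with no counting argument at all --- and grafts a universal machine inside the cylinder via $U(0^dz)=W(z)$; taking $d$ larger than the optimality constant of $D$ against $W$ ensures no new shorter descriptions appear (your guarded simulation is a workable, if heavier, substitute for this last point). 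The key is that the carved-out region must be one that provably contains no shortest $D$-programs, and the prefix $0^d$ is chosen precisely because strings beginning with $0^d$ are compressible; your region $1^K0\{0,1\}^*$ has no such property, which is why you are forced into reallocating all of $D$'s programs and into the unprovable density estimate. If you replace your reproduction region by ``copy $D$ verbatim on strings not beginning with $0^d$'' and prove the displayed compressibility fact, your argument closes.
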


\begin{proof}
Let us prove first that \emph{for some $d$, none of the strings $0^dx$} (i.e., strings starting with $d$ zeros) \emph{appear as shortest $D$-descriptions}. Indeed, the complexity of a string $0^d x$ of length $n$ does not exceed $n-d+2\log d + c$ for some $c$ that does not depend on $d$ and $x$ (start with self-delimited description of $d$ of length $2\log d +c$ and then add $x$ of length $n-d$). If such a string is a $D$-description of some $y$, then the complexity of $y$ does not exceed $n-d+2\log d+c'$ where again the constant $c'$ does not depend on $d,x,y$.  So for large enough constant $d$ the description $0^d x$ will never be a shortest $D$-description of any string $y$, since $n-d+2\log d + c' <n$ for all $n$.

Fix some $d$ with this property. Then the complexity function $\KS_D$ does not increase if we redefine $D$ on strings starting with $d$ zeros and let $D(0^d z)$ be equal to $U(z)$ where $U$ is some universal algorithm. Indeed, all the shortest $D$-descriptions remain intact. On the other hand, this change makes $D$ universal.

There is still a possibility that the complexity could \emph{decrease} after this change, but this will not happen for large enough $d$ (since $D$ was optimal, $U$ provides descriptions that are at most $O(1)$-shorter compared to $D$-descriptions).  
\end{proof}

Let us note also that some theorems in the textbooks about Kolmogorov complexity use not only the complexity function but also refer to the decompressor explicitly. For example, one can show that \emph{for any optimal $D$ there exist some constant $c$ such that there exist at most $c$ shortest $D$-descriptions of $x$}. This statement cannot be formulated just in terms of the complexity function, so for every optimal (or universal) $D$ it needs to be proven anew, and this is not difficult, see \cite[p.~40, Ex.~40, \emph{Hint}]{suv}.  Note also that one can weaken the restriction and consider all descriptions that are at most $d$ bits longer than the optimal ones. Then $c$ depends on $d$, but the statement remains true. 

\begin{question}
What are other examples of statements of this type? Are there cases when we need to know that $D$ is universal (by adjunction), not only optimal?
\end{question}

An interesting example of this type was suggested by J.~Miller for the case of prefix complexity. Let $D$ be a prefix-free machine. Then, for each string $x$, we consider the probability $m_D(x)$ to get output $x$ on a random input, i.e., the sum
\[
m_D(x)= \sum \{2^{-|p|}\colon D(p)=x\}
\]
If $D$ is universal, all the reals $m_D(x)$ are random (for the same reason as for the Chaitin $\Omega$-numbers). On the other hand, using Kraft--Chaitin construction, we can get a prefix-free $D$ such that for every $x$ there exists exactly one program for $x$ for each of the lengths $\KP(x)+1$, $\KP(x)+2$, etc. For this $D$ the sum $m_D(x)$ is equal to $2^{-\KP(x)-1}$ (and is a power of $2$, not a random number).

The proof of Proposition~\ref{optimal2universal} can be modified for the case of conditional complexity (descriptions of the form $0^dx$ cannot be the shortest ones if $d$ is sufficiently large, and those descriptions can be reused to achieve universality). Therefore, for every optimal $D$ there exists a universal $U$ such that $\KS_U(x\cnd y)=\KS_D(x\cnd y)$ for all $x$ and $y$. 

\section{Prefix complexity: universality vs optimality} 

\subsection{Unconditional prefix complexity}
There is a version of complexity that was introduced by Levin and later independently by Chaitin, and now is usually called (unconditional) \emph{prefix complexity}. There are three different versions of the definition.

\emph{First one}: in the definition of $\KS_U$ we consider only prefix-free algorithms $U$. Here \emph{prefix-free} means that $U(p)$ and $U(p')$ cannot be both defined for some string $p$ and its extension~$p'$, i.e., a longer string $p'$ that has $p$ as a prefix.  In other terms: no two different strings in the domain of $U$ can be comparable in the sense that one is a prefix of the other one. If we restrict ourselves to prefix-free algorithms $U$, we again have an optimal algorithm \emph{among them}. There is also a prefix-free algorithm that is universal among prefix-free algorithms. This is how Chaitin introduced the prefix complexity.

\emph{Second one}: in the definition of $\KS_U$ we consider only prefix-stable algorithms. Here \emph{prefix-stable} means that if $U(p)$ is defined for some $p$, then $U(p')$ is defined and is the same for all $p'$ that are extensions of $p$. Again, there exist optimal and even universal algorithms among all prefix-stable algorithms. This definition goes back to Levin (but is less popular now, unfortunately).

The \emph{third definition} is more abstract and deals only with functions (but not with the algorithms): we consider upper semicomputable functions $K$ with integer values defined on strings such that $\sum_x 2^{-K(x)}\le 1$. Upper semicomputability means that $K(x)$ is a limit of a uniformly computable   decreasing sequence of total computable functions $K_n(x)$. There exists a minimal function (up to $O(1)$ additive term) in this class, and it is called prefix complexity.

Prefix complexity can also be defined as the
negative logarithm of maximal lower semicomputable semimeasure. 
We will not go into details here (see, e.g.,~\cite{suv}), but briefly use this characterization in Section~\ref{omega} 
about $\Omega$-numbers.
\medskip

Let us now return to three definitions listed above. It turns out that these three definitions are equivalent in a rather strong sense:

\begin{proposition}\label{prefix}
\leavevmode

\textup{(a)}~The classes of functions $\KS_U$ for all prefix-stable $U$ and for all prefix-free $U$ are the same and coincide with the class considered in the third definition.
 
 \textup{(b)}~For the first two definitions, the class of functions that correspond to universal prefix-free \textup[prefix-stable\textup] algorithms coincides with the class of functions that correspond to optimal prefix-free \textup[\emph{resp.} prefix stable\textup] algorithms, and therefore consists of $O(1)$-minimal functions in the class mentioned in~\textup{(a)}.
\end{proposition}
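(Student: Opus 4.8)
The plan is to prove (a) by a chain of inclusions between the three classes and then to reduce (b) to a single construction. For the two easy inclusions in (a): if $U$ is prefix-free then $\KS_U$ is upper semicomputable (enumerate $\mathrm{dom}\,U$ and keep, for every $x$, the current minimum of $|p|$ over $p$ with $U(p)=x$), and $\sum_x 2^{-\KS_U(x)}\le\sum_{p\in\mathrm{dom}\,U}2^{-|p|}\le 1$ by Kraft's inequality; if $U$ is prefix-stable the same works once one observes that every shortest description of $x$ lies on the antichain $F$ of minimal elements of $\mathrm{dom}\,U$ (a shorter prefix in the domain would give, by prefix-stability, a shorter description), so again $\sum_x 2^{-\KS_U(x)}\le\sum_{q\in F}2^{-|q|}\le 1$. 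For the reverse inclusion I would build, from a given upper semicomputable $K$ with $\sum_x 2^{-K(x)}\le 1$, a prefix-free $U$ with $\KS_U=K$ \emph{exactly}; taking the upward closure of its domain then yields a prefix-stable machine with the same complexity function, since the closure changes no $\KS_U(x)$ (a prefix-free domain contains no proper prefix of any of its members). That gives (a) and identifies the common class with the one from the third definition.

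The exact realization is the technical heart of (a). Write $K=\lim_n K_n$ with $K_n$ a uniformly computable non-increasing sequence of total functions, and run a Kraft--Chaitin construction that at every stage keeps, for each $x$ with $K_n(x)<\infty$, one codeword of length exactly $K_n(x)$, and \emph{replaces} $x$'s codeword by a shorter one whenever $K_n(x)$ drops. This is admissible because at every stage the total weight of the current codewords equals $\sum_x 2^{-K_n(x)}\le\sum_x 2^{-K(x)}\le 1$, so the greedy allocation of dyadic intervals never runs out of room; in the limit each $x$ is left with a single codeword of length $K(x)$ and no shorter description, i.e.\ $\KS_U=K$. The step that needs care --- and which I expect to be the main obstacle in (a) --- is the bookkeeping of the ``free'' part of the tree under both allocations and releases: one must maintain it in a shape that still guarantees a free dyadic interval of any prescribed size $2^{-m}$ whenever the free measure is at least $2^{-m}$.

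For (b), first note, using (a), that $\{\KS_D:D\text{ optimal prefix-free}\}$ is exactly the class of $O(1)$-minimal functions of the class of (a): if $D$ is optimal then $\KS_D\le\KS_V+O(1)$ for every prefix-free $V$, i.e.\ $\KS_D$ is $O(1)$-minimal in $\{\KS_V\}$, which by (a) is the class of (a); conversely, any prefix-free realization --- provided by (a) --- of an $O(1)$-minimal function of that class is automatically optimal. The same holds for prefix-stable machines. Since every universal machine is optimal, $\{\KS_U:U\text{ universal}\}\subseteq\{\KS_D:D\text{ optimal}\}$, and it remains to show the reverse inclusion: that each $O(1)$-minimal $K$ of the class of (a) is $\KS_U$ for some \emph{universal} prefix-free $U$. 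Here I would reserve the subtree below $0^k$ for a shifted copy of a fixed universal prefix-free machine $U_0$, setting $U(0^kp)\simeq U_0(p)$, and run the exact realization $M$ of $K$ from (a) in the complementary region --- the union of the cones below $1,01,\dots,0^{k-1}1$, of total measure $1-2^{-k}$, which is itself a tidy union of cones of distinct sizes, so the construction runs there unchanged. Then $U$ is prefix-free (its two parts sit below incomparable strings), universal (it contains a shifted copy of $U_0$), and $\KS_U(x)=\min\bigl(K(x),\,k+\KS_{U_0}(x)\bigr)$; since $K$ is $O(1)$-minimal we have $K\le\KS_{U_0}+O(1)$, so for $k$ large enough $\KS_U=K$. (The prefix-stable case is analogous: place a prefix-stable universal machine below $0^k$ and take the upward closure of the whole thing.)

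The remaining point --- the second potential obstacle --- is that $M$'s construction must fit into measure $1-2^{-k}$, so one needs $\sum_x 2^{-K(x)}<1$ for every $O(1)$-minimal $K$. I would argue this by contradiction. The real $\sum_x 2^{-K(x)}$ is always left-c.e.; if it equalled $1$ it would be computable, whence each $2^{-K(x)}=1-\sum_{y\ne x}2^{-K(y)}$ would be right-c.e., hence (being also left-c.e., uniformly in $x$) computable, hence $K$ itself would be computable. But an $O(1)$-minimal $K$ of the class of (a) coincides up to $O(1)$ with prefix complexity $\KP$, and no computable function does: otherwise, via the standard estimate $\KP(x)\le\KS(x)+O(\log\KS(x))$, the function $K$ minus a suitable computable $O(\log K(x))$ term would be an unbounded total computable lower bound for plain complexity $\KS$, contradicting the (already recalled) fact that every computable lower bound for $\KS$ is bounded. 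Hence $\sum_x 2^{-K(x)}<1$, and one picks $k$ with $2^{-k}\le 1-\sum_x 2^{-K(x)}$ and $k$ at least the constant witnessing $K\le\KS_{U_0}+O(1)$, completing the proof.
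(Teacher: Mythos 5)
Your overall architecture is sound, but there is a genuine gap at the point you yourself flag as ``the main obstacle'' in (a): the exact realization of $K$ by a prefix-free machine. Your plan --- keep one codeword of length $K_n(x)$ per $x$ and \emph{replace} it by a shorter one when $K_n(x)$ drops, charging only the \emph{current} codewords against the budget $\sum_x 2^{-K_n(x)}\le 1$ --- cannot work, because a partial computable function cannot be un-defined: once $U(p)=x$ has been declared, $p$ stays in the domain forever, so a ``released'' interval can never be reassigned to another output without destroying prefix-freeness (and for a prefix-stable machine any extension of $p$ is forced to map to $x$ as well). The honest accounting therefore charges \emph{every codeword ever issued}, and if $K_n(x)$ decreases through consecutive values down to $K(x)$ this costs up to $2^{-K(x)+1}$ per string, i.e.\ up to $2\sum_x 2^{-K(x)}$ in total, which exceeds $1$ whenever $\sum_x 2^{-K(x)}>1/2$. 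No bookkeeping of the free part of the tree repairs this; one needs a different idea. The paper's fix is non-uniform: if $\sum_x 2^{-K(x)}=1$ then $K$ is computable (wait until the accumulated sum forbids any further decrease) and one issues a single exact codeword per $x$; if the sum is strictly below $1$, one hard-codes a finite ``head'' on which $K$ is known exactly, issues exact codewords there, and on the tail issues a codeword of length $m$ for \emph{every} intermediate value $m\ge K(x)$ of the approximation --- affordable because the tail's weight can be chosen small enough that doubling it still fits. You need some such case analysis (or an equivalent device); as written, your construction fails.

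Your part (b), by contrast, is correct modulo (a) and takes a genuinely different route from the paper. The paper modifies the \emph{given} optimal machine in place: for prefix-free $D$ it locates a non-shortest description and repurposes it (padded with zeros), and for prefix-stable $D$ it needs a separate lemma asserting that some string has two incomparable descriptions, proved by a nontrivial interval-growth argument, in order to find a redundant cone to cannibalize. You instead rebuild from scratch: realize $K=\KS_D$ exactly inside measure $1-2^{-k}$ and graft a shifted universal machine under $0^k$, which reduces everything to the single fact that $\sum_x 2^{-K(x)}<1$ for $O(1)$-minimal $K$ --- and your derivation of that fact (sum equal to $1$ forces computability of $K$, which contradicts minimality) is essentially the paper's own observation. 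This bypasses the incomparable-descriptions lemma entirely and unifies the prefix-free and prefix-stable cases, at the price of leaning harder on the exact-realization step of (a), including the extra requirement that the realization fit into measure $1-2^{-k}$ rather than $1$ (which the paper's tail-doubling construction does provide, since head plus doubled tail can be kept strictly below $1$). So: fix the realization in (a) along the paper's lines, check it runs inside a union of cones of total measure $1-2^{-k}$, and your (b) goes through as a clean alternative to the paper's argument.
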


\begin{proof}
\textup{(a)}~It is easy to see that all functions $\KS_U$ for prefix-free or prefix-stable $U$ are upper semicomputable and  $\sum_x 2^{-\KS_U(x)}\le 1$. It remains to show the inclusion in the other direction. Let $K(u)$ be some upper semicomputable function on strings with integer values such that $\sum 2^{-K(x)}\le 1$. We need to construct a prefix-free algorithm $U$ such that $\KS_U(x)=K(x)$ for all $x$, and also a prefix-stable $U$ with the same property.

There are two different cases. The first case happens if $\sum 2^{-K(x)}$ is exactly $1$. Then the function $K$ is computable. Indeed, to check whether the current value of $K$ is final and will not decrease again, we wait until the sum of $2^{-K(x)}$ becomes close to $1$ enough to prevent the decrease (this means that decreasing $K(x)$ will make this sum greater than $1$). It remains to use the computable Kraft--Chaitin lemma (see, e.g.,~\cite[p.94]{suv}) to assign a description of length $K(x)$ to each $x$. (This can be easily done both in a prefix-free and prefix-stable setting.)

  Now assume that the sum of the series $\sum_x2^{-K(x)}$ is strictly less than $1$. Then one can find a tail of this series that can be doubled and still the sum is at most $1$. Using a non-uniform construction, we may assume that we know where this tails starts and the values of $K(x)$ for $x$ that are not in this tail. Then we again apply the Kraft--Chaitin lemma to the exact values of $K(x)$ for $x$ not in the tail, and for all pairs $(K(x),m)$ such that $x$ is in the tail and $K(x)\le m$ (in the order of their enumeration). The total sum of all values of $2^{-m}$ does not exceed the doubled sum of the tail, so the conditions of the Kraft--Chaitin lemma are satisfied.

\textup{(b)}
For the prefix-free algorithms, we need to take some program that is not optimal (such a program always exists, otherwise the prefix complexity function would be computable), and then repurpose it to achieve the universality requirement (with additional zeros, to avoid creating shorter programs).

  For the prefix-stable algorithms the situation is different. The descriptions for some $x$ form one or several cones rooted at  prefix-minimal descriptions (whose prefixes are not descriptions). If for some $x$ we have more than one cone, we can repurpose one of them (where the root is longer or has the same length) without losing optimal descriptions; if needed, we extend $x$ farther if avoid shorter descriptions. The choice of the prefix is hard-coded in the new decompressor. So it remains to prove the following lemma:
\begin{lemma}
Let $U$ be an optimal prefix-stable algorithm. Then there exists some string $x$ that has two incomparable descriptions.
\end{lemma}

  \begin{proof}[Proof of the lemma]   Assume that no string has incomparable descriptions (for some prefix-stable~$U$). This means that the descriptions of each string form only one cone: there exists a shortest description, and all other descriptions are extensions of it. We want to prove that $U$ cannot be optimal. Simulating $U$ on all inputs, we get, for each string $x$, some description of~$x$. It corresponds to some aligned interval in $[0,1]$ (binary fractions that have this description as a prefix). When a shorter description of the same string is found, and it is the prefix of the previously known description, this interval increases. And if an incomparable description of the same string appears, we know that some common prefix of these two descriptions will appear later, due to our assumption. So if we have, say, $2^n$ descriptions of $2^n$ different objects (strings), we have $2^n$ disjoint intervals in $[0,1]$ that may grow with time, and we can computably select one that has at most $O(2^{-n})$ 
space for growth, being bounded by the neighbor intervals. This guarantees that its maximal future size is $O(2^{-n})$ and therefore the complexity of the corresponding object is at least $n-O(1)$. Since this process is computable (given $n$), optimality would require the complexity of this object to be $O(\log n)$,  a contradiction.
\end{proof}
Using this lemma, we finish the proof of the proposition as explained above.
\end{proof}

\begin{question}
  Is there a similar (``algorithms-free'') description of the class of all functions $\KS_U$ for all (not necessarily prefix-free or prefix-stable) algorithms $U$? For every $U$, the function $\KS_U$ is lower semicomputable and there are at most $2^n$ objects $x$ such that $\KS_U(x)=n$, but are these properties strong enough to be a characterization of the class $\KS_U$ for all $U$? 
\end{question}

\subsection{Prefix conditional case}

The arguments used to prove the results of the previous section are not directly applicable for conditional prefix complexity $\KP(x\cnd y)$ since the reasoning is not uniform (and now we need a construction that computably depends on $y$). To discuss the situation in more details, let us recall that ``plain'' (non-prefix) conditional complexity  $\KS(x\cnd y)$ is defined as follows. We let
\(
\KS_D(y\cnd x)= \min\{ |p|\colon D(p,y)=x\}
\)
where $D$ is some algorithm applied to pairs of strings. 
Algorithm $U$ is optimal if function $\KS_U$ is minimal up to $O(1)$ in the class of all functions $\KS_D$ for all algorithms $D$. Algorithm $U$ is universal if for every algorithm $V$ there exists a string $v$ such that
\[
V(p,y)\simeq U(vp,y)
\]
for all $p$ and $y$. As before (for the unconditional case), universality implies optimality.

For the prefix case (and ``plain'' conditions) we consider prefix-free or prefix-stable algorithms. Algorithm $D(p,y)$ is \emph{prefix-free} if $D(p,y)$ and $D(p',y)$ are never defined both for some string $p$ and its extension $p'$ \emph{and the same\footnote{This condition is often obscured by informal language of ``self-delimited programs'', which sometimes causes misunderstanding, since the intuition of a ``self-delimited program'' assumes implicitly that the program is self-delimited by itself, without $y$. But in reality a prefix-free algorithm decides when to stop reading the program tape, \emph{depending on the content of the condition tape}.} $y$}. Algorithm $D(p,y)$ is \emph{prefix-stable} if $D(p,y)=x$ implies $D(p',y)=x$ for every $p'$ that is an extension of $p$.  As for the plain complexity, there exist universal prefix-stable and prefix-free algorithms, and they are optimal in their classes. The corresponding (conditional) complexity functions differ at most by $O(1)$ additive term. Another way to define conditional prefix complexity is to consider a minimal function in the class of upper semicomputable integer functions $K(x\cnd y)$ of two string arguments such that $\sum_x 2^{-K(x\cnd y)}\le 1$ for all $y$. All three definitions of conditional complexity (using prefix-free and prefix stable algorithms, or the class of functions), are equivalent up to $O(1)$-precision. However, the situation here is more complicated than in the unconditional case. 

\begin{question}
  Assume that we have an integer-valued function $K(x\cnd y)$ on pairs of strings such that $\sum_x 2^{-K(x\cnd y)}\le 1$ for all $y$. Can we find a prefix-free (or prefix-stable) conditional algorithm $U$ such that $\KS_U(x\cnd y)=K(x\cnd y)$?
\end{question}

\begin{question}
If the answer is negative, can we show that at least prefix-free and prefix-stable function give the same class? Assume that $U$ is a prefix-stable function; is there a prefix-free function $U'$ such that $\KS_U=\KS_U'$? (In the other direction it is true for trivial reason: we may extend any prefix-free function above.)
\end{question}

The argument for the non-conditional case (Proposition~\ref{prefix},~(a)) cannot be used directly since it is non-uniform (for each $y$ we need some advice, and this is not finite information in total). The argument for (b)  is non-uniform either, since the non-optimal description (for a given condition $y$) may have unbounded length. Still for prefix-free machines it can be applied after some additional preparations.

\begin{proposition}
Let $U(p,y)$ be an optimal prefix-free algorithm for conditional complexity. Then there exists a universal prefix-free algorithm $U'(p,y)$ such that $\KS_{U'}(x\cnd y) = \KS_U(x\cnd y)$.
\end{proposition}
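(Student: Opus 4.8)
The plan is to imitate the proof of Proposition~\ref{optimal2universal}: find a cylinder in program space that, for \emph{every} condition, contains no shortest $U$-description, and repurpose it to carry a universal prefix-free conditional machine. First I would prove, now \emph{uniformly} in the condition (unlike in the prefix-stable situation), that there is a constant $d$ such that for every $y$ no program of the form $0^d x$ is a shortest $U(\cdot \cnd y)$-description. The count is the one from Proposition~\ref{optimal2universal}: from a program $0^d x$ of total length $n$ one obtains a description of $U(0^d x \cnd y)$ of length $2\log d + O(1) + (n-d)$ by the machine that reads a self-delimiting code of $d$ and then feeds the rest, prefixed by $0^d$, to $U$ with the same condition; since this machine does not have $d$ hard-wired, the additive constant is independent of $d$ and $y$, and for $d$ large $n - d + 2\log d + O(1) < n$ excludes $0^d x$ from being shortest.

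Fixing such a $d$ (large enough also to exceed the optimality constant of $U$, so that any description of the form $0^d q$ is strictly longer than $\KS_U$ of the string it produces), the naive move is to set $U'(0^d q \cnd y) := W(q \cnd y)$ for a fixed universal prefix-free conditional machine $W$, and $U'(p \cnd y):=U(p \cnd y)$ for every $p$ not starting with $0^d$. Then $\KS_{U'}$ does not drop (the discarded $U$-descriptions all start with $0^d$ and are not shortest) and does not rise (the new descriptions are too long). The only failure possible is that $U'(\cdot \cnd y)$ is not prefix-free, and this happens exactly when $\mathrm{dom}(U(\cdot \cnd y))$ already contains a program $0^j$ with $j<d$, since such a program is a proper prefix of every reserved program $0^d q$; by prefix-freeness of $U$ there is at most one such ``blocking'' $j$ per $y$. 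Hence the ``additional preparation'': replace $U$ by an equivalent optimal prefix-free machine $U_1$ — meaning $\KS_{U_1}(\cdot \cnd y)=\KS_U(\cdot \cnd y)$ for all $y$ — whose programs (for every condition) all have a $1$ before position $d$, i.e.\ are of the form $0^i1w$ with $i<d$, so that no blocker can occur. Working condition by condition, one enumerates $\mathrm{dom}(U(\cdot \cnd y))$; as long as the (at most one) blocker $0^j$ has not shown up one keeps the current programs, and if it shows up one must recode, supplying for the value $U(0^j \cnd y)$ a program of length $j$ that contains a $1$. If $0^j$ is not the unique shortest description of that value it can simply be dropped; the delicate case is when it is, and there one uses that for an optimal prefix-free conditional machine $\sum_x 2^{-\KS_U(x \cnd y)}<1$ for every $y$. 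Indeed, were this sum equal to $1$, the shortest descriptions (being a prefix-free set of total measure $1$) would form a complete, hence maximal, code, so $\mathrm{dom}(U(\cdot \cnd y))$ would equal it and be complete; then $\KS_U(\cdot \cnd y)$ would be a \emph{computable} function of $x$ (enumerate $\mathrm{dom}(U(\cdot \cnd y))$ until the accumulated measure is close enough to $1$ to forbid any later decrease), hence $\KP(\cdot \cnd y)$ would be within an additive constant of a computable function and therefore bounded — impossible, since a computable lower bound for conditional prefix complexity is bounded while conditional prefix complexity is not. This strict inequality is the slack that, among the programs $0^i1w$ with $i<d$, leaves room for the recoding.

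Once $U_1$ is blocker-free, the set of programs starting with $0^d$ is, for every $y$, disjoint from $\mathrm{dom}(U_1(\cdot \cnd y))$ and has no prefix in it, so the construction of the previous paragraph goes through verbatim with $U_1$ in place of $U$: $U'$ is prefix-free for every $y$; it is universal, because an arbitrary prefix-free $V$ is simulated with the \emph{condition-independent} prefix $0^d w_V$, where $w_V$ is the prefix for $V$ inside $W$; and $\KS_{U'}(\cdot \cnd y)=\KS_{U_1}(\cdot \cnd y)=\KS_U(\cdot \cnd y)$. I expect the whole difficulty to sit in the preparation, and concretely in performing the recoding of the delicate case uniformly and effectively in $y$: this is exactly the point at which the proof of Proposition~\ref{prefix} cannot be transcribed (its prefix-stable half had several cones to trade between, and its prefix-free half had a single advice string of bounded length), and where the inequality $\sum_x 2^{-\KS_U(x \cnd y)}<1$ does the real work.
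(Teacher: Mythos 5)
Your overall architecture (reserve the cylinder of programs starting with $0^d$, for a $d$ independent of $y$, and embed a universal machine there) matches the paper's, and your first step --- that for some condition\nobreakdash-independent $d$ no program $0^dx$ is ever a shortest description --- is correct for the reason you give. The gap is exactly where you suspect it: the ``preparation'' that turns $U$ into a blocker-free $U_1$ with the same complexity function. Two things break in the proposed recoding. First, the replacement slot need not exist: you need a string $q$ of length $j$, $q\ne 0^j$, incomparable with \emph{every} other program in $\mathrm{dom}(U(\cdot\cnd y))$, but nothing prevents that domain from meeting every length-$j$ cylinder other than $[0^j]$ (say, via one very long program inside each such cylinder) while $\sum_x 2^{-\KS_U(x\cnd y)}$ and even the domain measure stay far below $1$. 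The inequality $\sum_x 2^{-\KS_U(x\cnd y)}<1$ bounds a measure; it says nothing about the existence of a \emph{whole free cylinder of the prescribed length} $j$, which is what the recoding requires, and an optimal machine's domain is in fact forced to be rich for every $y$. Second, even when such a $q$ exists you cannot find it effectively: $q$ must also be incomparable with all programs enumerated \emph{later}, a condition no finite stage certifies, so an on-line choice of $q$ can always be spoiled afterwards. (A smaller point: ``within an additive constant of a computable function and therefore bounded'' is not a valid inference --- computability, not boundedness, is what yields the contradiction --- and in any case per-$y$ strictness is weaker than what is ultimately needed.)

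The paper resolves both difficulties by changing the two ingredients you are missing. It first proves a \emph{uniform} slack lemma: there is a single $\varepsilon>0$ with $\sum_x 2^{-\KS_U(x\cnd y)}\le 1-\varepsilon$ for all $y$, obtained by effectively finding, for each $n$, a condition $y_n$ whose sum exceeds $1-2^{-n}$ and diagonalizing (producing $z_n$ with $\KS_U(z_n\cnd y_n)\ge n$ but $\KS(z_n)=O(\log n)$). It then abandons local surgery entirely: for each $y$ it re-allocates \emph{all} descriptions by the Kraft--Chaitin lemma, feeding the allocator first a request of length $d$ (so the reserved codeword is the same for every $y$) and then the lengths of the enumerated programs; the bound $2^{-d}+(1-\varepsilon)\le 1$ makes the allocation succeed, lengths are preserved exactly so $\KS_U$ is unchanged, prefix-freeness is automatic, and the reserved codeword carries the universal machine as in your last paragraph. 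To repair your proof, replace the recoding by this global Kraft--Chaitin rebuild and upgrade your strictness observation to the uniform lemma --- that lemma is where the real work is.
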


\begin{proof}
To use the same argument as in the plain case, we need to modify our optimal prefix-free algorithm in such a way that it is not defined on strings that start with $d$ zeros for some fixed $d$, the same for all conditions. Then this empty space can be used to embed some universal algorithm in the same way as before. This can be done in two steps. First, we made the following observation:
\begin{lemma}
For every optimal prefix-free $U$ there exists some $\varepsilon>0$ such that for every $y$ the sum $\sum_x2^{-\KS_U(x\cnd y)}$ does not exceed $1-\varepsilon$.
\end{lemma}

  Note that for every $y$ the sum is less than $1$, otherwise $\KS_U(x\cnd y)$ as a function of $x$ would be computable (and it differs from unconditional complexity by at most $O(1)$). The lemma says that the sums can be separated from $1$ uniformly in $y$.

\begin{proof}[Proof of the lemma]
Assume that the statement does not hold. Then for every~$n$ there exist some $y_n$ such that 
\[
\sum_x 2^{-\KS_U(x\cnd y_n)}>1-2^{-n}.
\]
Since the sum is uniformly lower semicomputable, we can find those $y_n$ effectively. 

Knowing $n$ and $y_n$, we can wait until the sum in their definition crosses the threshold $1-2^{-n}$. At that moment we know all strings $x$ such that $\KS_U(x\cnd y_n)<n$ (if a new string of complexity less than $n$ appears, the decrease in complexity would make the sum greater than $1$), and can select some string $z_n$ that is not in the list. Then $\KS_U(z_n\cnd y_n)\ge n$ while $\KS(z_n)=O(\log n)$, since $z_n$ can be computed given $n$, a contradiction (conditional complexity could not be much bigger than unconditional one).
\end{proof}

Now, using Kraft--Chaitin lemma, we rearrange the descriptions in $U$: for every $y$ we enumerate all $p$ such that $U(p,y)$ is defined. Their lengths form a sequence that satisfies the conditions of the Kraft--Chaitin lemma even after some large $d$ is prepended to it. Applying the lemma, we get some string of length $d$ that is free for inserting there a universal algorithm (and does not depend on $y$ since it is generated by Kraft--Chaitin allocator when only $d$ was given to it). This string then can be used in the same way as in the plain case.
\end{proof}

Still for prefix-stable case this argument does not work. The problem is that in this case a prefix of a description can later appear as a description (of the same object).

\begin{question}
Let $U(p,y)$ be an optimal prefix-stable algorithm. We look for a universal prefix-stable algorithm $U'$ such that $\KS_U(x\cnd y)=\KS_{U'}(x\cnd y)$ for all $x,y$. Can such an algorithm always be found?
\end{question}

\section{Chatin's $\Omega$-number[s]}\label{omega}

Chaitin defined his famous $\Omega$-number as the probability $\Omega_U$ of termination of a universal prefix-free algorithm $U$ on a random sequence: we apply $U$ to random bits until it terminates on some prefix of this random bit sequence. 

In~\cite{chaitin1977} he suggests some specific universal $U$ based on LISP, but it is more natural to consider this definition as a definition of a \emph{class} of reals (all reals that can be obtained in this way using different universal prefix-free $U$). They can be called \emph{Chaitin $\Omega$-numbers} (or just $\Omega$-numbers if the context is clear).
\medskip

Several equivalent characterizations were found for $\Omega$-numbers:
\begin{itemize}
\item Martin-L\"of random lower semicomputable reals in $(0,1)$;
\item lower semicomputable reals in $(0,1)$ that are maximal with respect to Solovay reducibility;
\item sums $\sum_n m(n)$ for maximal lower semicomputable semimeasures on $\mathbb{N}$.
\end{itemize}

\noindent
See~\cite[Section 5.7]{suv} for more details, historical references, and equivalence proofs.\footnote{The argument in~\cite[Theorem 103]{suv} proves only that a Solovay maximal lower semicomputable real is a sum of a maximal semimeasure. But in fact similar reasoning shows that every Solovay maximal semicomputable real $\omega$  is a termination probability for some universal prefix-free algorithm. We note that $\omega = 2^{-k}\Omega_U+\alpha$ for fixed prefix-free universal $U$, some $k$ and some lower semicomputable $\alpha$, and modify $U$ by adding prefix $0^k$ to all programs and using the rest of the Cantor space to get additional termination probability $\alpha$. Note that since $\omega<1$, for large $k$ there is enough space, since $\alpha\le\omega\le 1-2^{-k}$ for large enough $k$.}
\medskip

These results imply that for every prefix-free optimal $U$ (and therefore for every prefix-free universal~$U$) the sum $\sum_x2^{-\KS_U(x)}$ is an $\Omega$-real, since $2^{-\KS_U(x)}$ is a maximal lower computable semimeasure on $\mathbb{N}$. The reverse is also true, so we get one more characterization of $\Omega$-numbers:

\begin{proposition}
For every $\Omega$-number $\omega$ there exists a prefix-free universal $U$ such that $\Omega = \sum_x2^{-\KS_U(x)}$.
\end{proposition}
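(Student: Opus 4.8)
The plan is to start from the known equivalent characterizations of $\Omega$-numbers listed just above the statement, in particular the one that says every $\Omega$-number is a sum $\sum_n m(n)$ of a maximal lower semicomputable semimeasure on $\mathbb{N}$. Since $2^{-\KP(x)}$ (up to a constant factor, with $x$ ranging over strings identified with natural numbers) \emph{is} such a maximal semimeasure, the target is to realize a given $\Omega$-number $\omega$ as $\sum_x 2^{-\KS_U(x)}$ for some prefix-free universal $U$. The subtlety is that $\KS_U$ takes integer values, so $2^{-\KS_U(x)}$ is a dyadic-valued semimeasure, whereas an arbitrary maximal semimeasure need not be; so we cannot just ``read off'' $U$ from $\omega$ directly. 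Instead I would combine two ingredients: (i) Proposition~\ref{prefix}, which tells us the class of functions $\KS_U$ for prefix-free universal $U$ is exactly the class of $O(1)$-minimal upper semicomputable integer functions $K$ with $\sum_x 2^{-K(x)}\le 1$; and (ii) the freedom to redistribute ``leftover'' probability mass.

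Concretely, first I would fix one reference prefix-free universal machine $U_0$ and let $\Omega_0=\sum_x 2^{-\KS_{U_0}(x)}$, which is an $\Omega$-number. By the Solovay-maximality characterization (and the footnote argument in the excerpt, which I am allowed to use), any two $\Omega$-numbers $\omega,\Omega_0$ satisfy: there are $k\in\mathbb{N}$ and a lower semicomputable real $\alpha\ge 0$ with $\omega = 2^{-k}\Omega_0 + \alpha$ and $\alpha \le \omega \le 1 - 2^{-k}$ for $k$ large enough. Now build $U$ as follows: reserve the cylinder of programs starting with $0^k$ for a copy of $U_0$ with a $k$-bit prefix stripped — this contributes exactly $2^{-k}\Omega_0$ to the halting probability and, crucially, multiplies every $U_0$-description length by adding $k$, so it does \emph{not} change which machine is being simulated up to adjunction and keeps $U$ universal. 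Then use the remaining part of Cantor space (programs not starting with $0^k$) to add halting probability exactly $\alpha$ by a Kraft--Chaitin-style enumeration of requests whose lengths $\ell_i$ satisfy $\sum 2^{-\ell_i} = \alpha \le 1 - 2^{-k}$; here we must be careful that the outputs we assign to these extra programs never create shorter descriptions of any string than those coming from the embedded $U_0$, which is arranged by always outputting on these auxiliary programs a string whose $U_0$-complexity is already much smaller (e.g.\ a fixed simple string, or padding so the auxiliary description length exceeds $\KS_{U_0}$ of its output). This guarantees $\KS_U(x) = \KS_{U_0}(x) + k$ for all $x$ coming from the main part, but that shifts all complexities by $k$ and hence changes $\sum_x 2^{-\KS_U(x)}$ to $2^{-k}\Omega_0 + \alpha = \omega$ only if the auxiliary part contributes the missing $\alpha$ and nothing else — which is exactly what Kraft--Chaitin delivers.

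The main obstacle I anticipate is the bookkeeping to ensure simultaneously that (1) $U$ is genuinely universal (by adjunction), (2) the total halting probability is \emph{exactly} $\omega$ and not merely $\le \omega$, and (3) the extra auxiliary programs do not accidentally lower any $\KS_U(x)$ below $\KS_{U_0}(x)+k$, since that would perturb the sum. Point (2) is delicate because $\alpha$ is only lower semicomputable, so the Kraft--Chaitin requests must be issued in a streaming fashion matching the approximations of $\alpha$ from below, and one has to check the running total never exceeds the budget $1-2^{-k}$ — which holds by the inequality $\alpha\le 1-2^{-k}$. Point (3) is handled by a clean separation of roles: auxiliary programs are assigned outputs together with enough padding that their length always exceeds the current known value of $\KS_{U_0}(\text{output})+k$, so they are never shortest descriptions; alternatively one can route all auxiliary outputs to a single fixed string of tiny complexity. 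Once these three points are verified, Proposition~\ref{prefix}(b) (or a direct check) confirms $U$ is an optimal prefix-free machine with $\KS_U = \KS_{U_0}(\cdot)+k$ up to the auxiliary outputs being irrelevant, hence $\sum_x 2^{-\KS_U(x)} = \omega$, completing the proof. The remaining detail — that the argument of the footnote about Solovay-maximal reals genuinely gives the decomposition $\omega = 2^{-k}\Omega_0+\alpha$ with lower semicomputable $\alpha$ — is exactly what the excerpt already grants us, so I would simply cite it.
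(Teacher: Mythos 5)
There is a genuine gap, and it sits at the center of your construction. The quantity in the statement is $\sum_x 2^{-\KS_U(x)}$ --- one term per string $x$, determined by the \emph{minimal} description length only --- and not the termination probability $\sum_{p\in\mathrm{dom}(U)}2^{-|p|}$, which sums over \emph{all} halting programs. The footnote you lean on (padding $U_0$ with the prefix $0^k$ and filling the rest of Cantor space with extra halting programs of total measure $\alpha$) is an argument about the termination probability, where every added program contributes its $2^{-|p|}$ to the sum. For $\sum_x 2^{-\KS_U(x)}$ the situation is the opposite: your auxiliary programs are explicitly engineered so that they are \emph{never} shortest descriptions (padded beyond $\KS_{U_0}(\cdot)+k$, or all routed to one fixed simple string). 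Such programs contribute exactly nothing to $\sum_x 2^{-\KS_U(x)}$, so your machine satisfies $\sum_x 2^{-\KS_U(x)}=2^{-k}\Omega_0$, not $2^{-k}\Omega_0+\alpha$. Your requirements (2) and (3) are not two separate bookkeeping chores --- they directly contradict each other.

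To inject the missing mass $\alpha$ into $\sum_x 2^{-\KS_U(x)}$ you must genuinely \emph{lower} the complexity of infinitely many strings, and then each new shortest description of length $\ell$ for a string $x$ changes the sum by $2^{-\ell}-2^{-\KS_{U_0}(x)-k}$ rather than by $2^{-\ell}$, so hitting $\omega$ exactly requires controlling these cancellations against the streaming lower approximations of $\alpha$ and the upper approximations of the complexity function. This is precisely what the paper's proof does: it works at the level of upper semicomputable functions (via Proposition~\ref{prefix}), starts from a minimal $K$ with $\sum_x 2^{-K(x)}+\alpha=\omega$, and builds a decreasing sequence $K'_t\le K_t$ maintaining the sandwich
\[
\sum_x 2^{-K_t(x)}+\alpha_n-1/n \;\le\; \sum_x 2^{-K'_t(x)} \;\le\; \sum_x 2^{-K_t(x)}+\alpha_n,
\]
using the fact that $K'_t$ has arbitrarily large values so the sum can be increased in arbitrarily small steps. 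Your decomposition $\omega=2^{-k}\Omega_0+\alpha$ via Solovay maximality is the same first step as the paper's, and reducing to Proposition~\ref{prefix} is the right framing; what is missing is the mechanism that actually realizes the increase by $\alpha$ in the shortest-description sum, which cannot be done by adding non-shortest programs.
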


\begin{proof}
According to Proposition~\ref{prefix}, it is enough to construct an upper semicomputable $K$  such that
\begin{itemize}
\item $\sum_x 2^{-K(x)}\le 1$;
\item $K$ is minimal in this class;
\item $\sum_x 2^{-K(x)} = \omega$. 
\end{itemize}
Then Proposition~\ref{prefix}~(a) converts this $K$ into an optimal $U$ with required properties, and then Proposition~\ref{prefix}~(b) converts it into a universal $U$ .

To achieve this goal, let us start with some minimal function $K$ in this class. Since $\omega$ is Solovay maximal, we can find $d$ such that $\omega - 2^{-d}\sum_x 2^{-K(x)}$ is lower semicomputable. We may increase $K$ by $d$ and assume without loss of generality that
\[
\sum_x 2^{-K(x)}+\alpha = \omega
\]
for some lower semicomputable $\alpha$. What we need to do is to decrease $K$ and obtain an upper semicomputable $K'\le K$ such that 
\[
\sum_x 2^{-K'(x)} = \sum_x 2^{-K(x)}+\alpha.
\]
This can be done for any upper semicomputable $K$ and lower semicomputable $\alpha$.

To achieve this, we consider approximations $K_t$ for $K$ from above (they decrease as $t$ increases), and rational approximations $\alpha^t$ for $\alpha$ from below (they increase as $t$ increases). At every step $t$ we construct a computable approximation $K'_t$ for $K'$ that decreases as $t$ increases. We may assume that $K_t(x)$ is infinite for all $x$ except for finitely many of them (and the list of those $x$ is a part of finite object $K_t$).

We let $K'_0=K_0$ and then maintain that
\begin{itemize}
\item $K'_t \le K_t$ everywhere;
\item $\sum_x 2^{-K_t(x)} +\alpha_n-1/n \le \sum_x 2^{-K'_t(x)} \le \sum_x 2^{-K_t(x)}+\alpha_n$.
\end{itemize}
As $t$ increases, $K_t$ decreases and may become less than $K'_t$ thus violating the first condition. We have to decrease some values of $K'_t$ to maintain this condition. This (if done in a minimal way) will not violate the second inequality in the second condition since the increase in $2^{-K'_t(x)}$ does not exceed the increase in $2^{-K_t(x)}$, and $\alpha_n$ may only increase. It remains to satisfy the first inequality in the second condition (not violating the second inequality). This is always possible since $K'_t$ has arbitrary large values and by decreasing them we can increase the sum $\sum_x 2^{-K'(x)}$ by steps that are as small as needed. 

As $t\to\infty$, we get a limit function $K'$ that is upper semicomputable, does not exceed $K$ and $\sum_x 2^{-K'(x)}=\sum_x 2^{-K(x)}+\alpha=\omega$, as required. This finishes the proof.
\end{proof}

\section{Specific universal algorithms}

All the questions discussed above would disappear if everyone could agree on some specific programming language (decompressor) $D$. There are some other reasons why this sounds attractive. First of all, after that the expression ``$x$ is simpler than $y$'' would have some objective (even if somehow arbitrary) meaning. Also we would be able to use some specific constants in the theorems of algorithmic information theory. 

Here is how Levin explains his motivation for suggesting some specific optimal measure of complexity (and corresponding algorithm):
\begin{quotation}
In concrete considerations involving the algorithmic approach to information theory (which is based on the above-mentioned [optimality] theorem and a number of others like it), it is important to choose an optimal measure of complexity such that the constant is not too large when the measure is compared with other natural measures. In this regard, Kolmogorov's remark (see [1]) [cited above, number~\cite{kolmogorov1965} in our bibliography] is of interest; namely that in his opinion it is possible to see to it that the constant in question does not exceed a few hundred bits. However, no indication is given in [1] of how the complexity measure should be chosen in order to achieve this. It should be also noted that in a number of constructions the complexity appears in the exponent of important expressions and that in these cases a practically acceptable size of the constant can be measured only in tens of bits. Of course, we would ideally like the arbitrariness in the choice of the optimal complexity measure among natural ones to be measured in only a few bits, but to achieve this we must have a very explicit idea of the concept of ``naturalness'', which is hard to accomplish with such narrow limits on the arbitrariness.

Below, in \S 4, we will give a method allowing us to construct concrete optimal functions for defining complexity measures and a wide class of other analogous measures. It seems likely that the constants that appear in comparing the functions given below with the functions defined by other natural methods will not exceed a few hundred bits. Our construction allows several variations and possibly some refinements not noted here. We think that in one of these versions the size of the constant does not even exceed several tens of bits.

The constructions we use are very simple, and consequently, it seems, not original in most cases. The usefulness of this paper may reside in the selection of the goals being considered, rather than in its inventiveness. \cite[p.~727]{levin1977}
\end{quotation}
Still it seems that Levin's universal algorithm (or corresponding programming language) was never used later for any practical or theoretical purpose.

Some other specific universal algorithms were suggested by others. Chaitin in his book~\cite{chaitin1977} defines some version of LISP, explicitly construct several LISP programs, and defines complexity in terms of some specific universal computer working with LISP programs~\cite[Section 6.2]{chaitin1977}. Unlike Levin, he does not make any claims about conciseness of this programming language, and, like for Levin's approach, it seems that this specific universal algorithm was never essentially used later for any practical or theoretical purpose by others. Chaitin used it to construct an explicit description for a specific random $\Omega$-number in terms of exponentially Diophantine equations, but it is more a \emph{tour de force} than some practical application.

John Tromp~\cite{tromp2007} switched from LISP to combinatory logic where the universal combinator can be compressed to only $425$ bits, and computed some explicit constants. For example, he proved that for his universal algorithm (its prefix version) one has
\[
\KP(x,y) \le \KP(x)+\KP(y\cnd x^*)+1876
\]
(see~\cite[p.~255]{li-vitanyi2019} for more details).

From a programmer's view, the difference between complexities with respect to different programming languages can be described as follows: we ask what is the minimal length of an interpreter of a programming language $A$ written in programming language $B$. Now, when even the simplest application usually occupies several megabytes, only old people remember that in the era of $8$-bit computers with 32Kbytes memory an interpreter (or compiler) of a not so simple language (like \texttt{Pascal}) could sometimes fit into several kilobytes. This is only a rough analogy, but this suggests the sizes of the constants in the range $10^4\ldots10^5$ (recall that byte is $8$ bits).

Let us also mention another approach to provide some specific numbers as ``complexities'' of strings that was suggested in~\cite{delahaye-zenil2007}. Recall that asymptotically the complexity can be identified with a priori probability (technically, prefix complexity differs from discrete a priori probability at most by $O(1)$, see, e.g.,~\cite{suv}). To find some ``approximation'' of the a priori probability, one can consider some natural class of computing devices (e.g., all Turing machines with given alphabet and given number of states), take a random one uniformly from the class and run it on a random input (or on an empty input). The empirical probability of obtaining a given string $x$ in such an experiment can the be considered as a replacement for a (theoretical) a priori probability.

\section{Proving non-randomness}

One of the possible practical application where the numeric values of Kolmogorov complexity are important is \emph{proving non-randomness in real life}, or more technically, rejecting a null hypothesis convincingly. This kind of application is discussed in~\cite{gurevich-passmore2012}. 

Consider an experiment where $1000$ bits are generated, and the null hypothesis is that they are generated by tossing a fair coin. In other words, the assumed distribution is the uniform distribution on $\mathbb{B}^{1000}$. Seeing thousand of zeros (tails) as outcome, any reasonable observer would reject the null hypothesis. Why? Because this event, all-zeros sequence $Z$, has extremely small probability $2^{-1000}$. But the same can be said for any other sequence $R\in \mathbb{B}^{1000}$ instead of the zero sequence $Z$: the probability of getting $R$ is the same $2^{-1000}$. So why we reject the null hypothesis seeing $Z$ but still accept it seeing $R$? 

This paradox has no answer in the classical probability theory. The answer suggested by algorithmic information theory: the difference between $Z$ and most other sequences $R$ is that $Z$ has low complexity (has a short description). More formally, we reject null hypothesis because not just a low probability event happens, but because a \emph{simple} low probability event happens.

What is the trade-off between the probability and complexity? What is a more convincing argument against a null hypothesis: an event of complexity $100$ and probability $2^{-1000}$ or an event of complexity $1000$ and probability $2^{-2000}$? Statisticians speak about the \emph{Bonferroni correction}: if we have a natural family of $k$ events with small probability $p$ (under the null hypothesis) and one of them (not specified in advance) happens, this is $k$ times less strong argument compared to one natural event with the same probability $p$. Translating this remark to algorithmic information theory, we may say that an event of complexity $100$ is an element of a family of about $2^{100}$ events that have the same or smaller complexity, so we should multiply the probability by $2^{100}$ (and get $2^{-900}$ in the first case compared to $2^{1000}\times 2^{-2000}=2^{-1000}$ for the second case, which therefore sounds more convincing). In general, an event $A$ of complexity $\KS(A)$ and probability $\Pr[A]$ should be considered as an argument against the null hypothesis with effective $p$-value
\[
2^{\KS(A)}\cdot \Pr[A]\eqno(*)
\]
According to this rule, seeing a sequence $x$ with length $n$ and complexity $n-d$ is an argument against fair coin hypothesis with $p$-value $2^{-d}$. In other word, the difference between length and complexity can be considered as ``randomness deficiency'' (with respect to fair coin distribution).

Recall the words of Levin who said that complexity is ``in the exponent of important expressions'' (and so an acceptable divergence between different complexity functions is only a few dozens bits); the expression for the effective $p$-value is one of them. It seems that there is no practical hope to use any specific complexity function for some formal procedure of establishing a standard for ``statistical proofs'' in the court, since such a precision seems completely unrealistic (cf. the examples discussed above).

Gurevich and Passmore suggested in~\cite{gurevich-passmore2012} to use a restricted language to specify ``focal events'' whose happening can be a statistical reason to reject the null hypothesis. May be now the large language models can be also used since they somehow provide the a priori probability (which roughly corresponds to $2^{-\KS(x)}$). They also somehow use the previous experience of the mankind as an oracle (or condition) since they are trained on a huge amount of data produced by Nature and humans (who are part of Nature), and this kind of oracle is desirable: for example, if the focal event was specified in advance, the existence of such a prophecy should decrease its complexity in $(*)$. Of course, using this approach we need to agree on which of many competing LLMs we use, which training data should be used, and so on.

\section{Combinatorial applications: examples}

Leaving science fiction and returning to mathematical questions, let us consider one specific situation where small constants in the theorems about Kolmogorov complexity are essential. An important type of the Kolmogorov complexity applications is proving combinatorial results using Kolmogorov complexity as a tool. A lot of examples can be found in the textbook of Li and Vit\'anyi~\cite[Chapter 6, The Incompressibility Method]{li-vitanyi2019}. They start with three examples (Section~6.1):
\begin{itemize}
\item Recognition of palindroms with a one-tape Turing machine requires $\Omega(n^2)$ steps (Lemma~6.1.1);
\item No-carry adder algorithm for pairs of $n$-bit strings (carry bits are collected separately, and then the carry bits vector is recursively added to the \texttt{xor} bits vector) uses no more than $1+\log n$ steps in average (Lemma 6.1.2);
\item There is a $n\times n$ Boolean matrix such that all its reasonably sized submatrices  have high rank (Lemma 6.1.3; it gives specific bounds for sizes and ranks).
\end{itemize}
The core idea is the same in all three examples: undesirable behavior of an object (say, a large number of steps in the second example, or a small rank submatrix in the third example) happens only when the object is compressible.  In other words, we assume that object has some undesirable property (e.g., a pair of strings requires a lot of operations during the addition) and conclude that this object has small complexity. Then we conclude that an incompressible object does not have this undesirable property, and, moreover, most objects do not have this property.

However, a careful reader will notice a problem in these arguments. For example, in the proof of Lemma 6.1.2 the authors note that if the addition algorithm requires $t$ steps, then strings $x$ and $y$ contain opposite bits in $t-1$ consecutive positions (that make carry propagation $t$ steps long), and in this case $x$ can be simplified given $y$ is a condition. Indeed, instead of specifying the bits in the opposite part of $x$, one can just specify the coordinates of this part (that requires only $O(\log n)$ bits), since the opposite bits in $y$ are included in the condition. Li and Vit\'anyi present this conclusion on p.~453 as $\KS(x\cnd n,y,q)\le n-t-1+\log n$ where (they say) $q$ is a program of $O(1)$ bits needed to reconstruct $x$. Note that there is no constant here, and this is not true if understood literally (and $q$ does not help, since we still need to use the optimality for conditional complexity, and it will bring some constant anyway). Still they need this bound without a constant, since the statement of Lemma~6.1.2 does not include constants.

This is not an isolated problem; it happens all the time when we try to derive some combinatorial results using Kolmogorov complexity. In Section~6.3 (Combinatorics) the authors explain:
\begin{quote}
The general pattern is as follows. When we want to prove a certain property of a group of objects (such as graphs), we first fix and incompressible instance of the object\ldots\ It is always a matter of assumed regularity in this instance to compress the object to reach a contradiction.
\end{quote}

The first example in Section~6.3 deals with tournaments (complete directed graphs: every two players played one game and one of them won). Theorem~6.3.1 (p.~460) uses incompressibility method to show that there is a tournament with $n$ vertices that does not have a transitive subtournament (a subset where ordering is linear) of size greater than $1+2 \lceil \log n\rceil$. Indeed, if there is a transitive subtournament of size $v$, then one can specify the tournament by listing all the vertices of the subtournament in increasing order (this requires $v \lceil \log n \rceil$ bits) and then specifying the remaining bits in some fixed order. In this way we replace $v(v-1)/2$ by $v\lceil\log n\rceil$ bits, and if $\lceil \log n\rceil < (v-1)/2$, the tournament is compressible.

The authors present this argument saying that $\KS(E(T)\cnd n,p)\le l(E'(T))$ where $l(E'(T))$ is the length of the description suggested above, $E(T)$ is an encoding of a tournament $T$ and ``$p$ is a fixed program that on input $n$ and $E'(T)$ \ldots outputs $E(T)$''. Again this is not true if understood literally, since the conversion to the universal algorithm for conditional complexity will introduce some constant.\footnote{There is another technical problem with the statement and proof of Theorem~6.3.1. The statement uses $\lfloor 2 \log n\rfloor$ instead of $2\lceil \log n\rceil$, and the proof says carelessly that ``it is easy to verify that $2\lfloor \log n \rfloor = \lfloor 2\log n\rfloor$ for all $n\ge 1$''. But this is a minor detail.}

This is not a problem for this proof: one should say that we get a bound for conditional complexity \emph{for some ad hoc decompressor} (which is not universal, by the way), and then say that for the corresponding complexity function there exists an incompressible string of length $n$. Essentially this argument is just a counting argument in disguise: we say that there are not enough bad objects since each of them has a short description (of some specific type).

\section{Combinatorial applications: discussion}

In general, we have seen that
\begin{itemize}
\item To prove a combinatorial statement of asymptotic nature, we may use Kolmogorov complexity, and the unspecified constants in $O(\cdot)$-terms that appear in Kolmogorov complexity results are usually absorbed by asymptotic terms in the statement.

\item If we want to prove a combinatorial statement without unspecified constants in the asymptotic terms, the constants in the statements about Kolmogorov complexity become a problem.

\item For some simple cases we can overcome this problem easily by saying that we use $\KS_U$ with respect to some specific algorithm $U$ in the proof. This often happens if the proof is essentially a counting argument in disguise.
\end{itemize}

However, there are arguments that use Kolmogorov complexity in a more involved way. For example, sometimes (as noted in~\cite[Exercise 6.1.1, p.~465]{li-vitanyi2019}) we use the symmetry of information property (the formula for the complexity of pairs), and this is possible only if we use an optimal algorithm in the definition of complexity.

So what can we do if we have a proof of some combinatorial results using Kolmogorov complexity and want do get rid of asymptotic terms with unspecified constants that appear due to $O(1)$-terms in complexity statements? There are several possibilities.

Sometimes we can introduce some additional parameter and consider the limit behavior. For example, the inequality
\[
2\KS(x,y,z)\le \KS(x,y)+\KS(x,z)+\KS(y,z)+O(\log n)
\]
for strings $x,y,z$ of length at most $n$ can be used to prove the following combinatorial statement: if $A$ is a subset of the product of three finite sets $X\times Y\times Z$,  and $A_{XY}$, $A_{YZ}$, and $A_{XZ}$ are its projections on $X\times Y$, $Y\times Z$, and $X\times Z$ respectively, then their cardinalities satisfy the inequality
\[
(\# A)^2 \le (\# A_{XY})\cdot (\# A_{YZ})\cdot (\#A_{XZ}).
\]
Note that this inequality does not involve any constants. To prove it, for an arbitrary $N$, we apply the inequality for complexities to an incompressible point in $A^N$ considered as a triple  made of $x\in X^N$, 
$y\in Y^N$, and $z\in Z^N$. As $N\to\infty$, the asymptotic term $O(\log N)$ becomes negligible compared to the linear terms that include logsizes of sets, and in the limit we get the required inequality  without any unspecified constants (see~\cite[p.12 and Chapter~10]{suv} for details).

However, in other examples this is more difficult. Here is one. Levin noted that there exist an ``everywhere complex sequence'', an infinite sequence of bits where every factor (substring) of length $n$ has complexity at least $0.99n - O(1)$. (Here $0.99$ can be replaced by arbitrary number smaller than $1$.) The original proof of Levin is reproduced in~\cite[Section 4.2]{durand-levin-shen2007} and uses (inductively) the complexity arguments to construct the sequence block by block.

The combinatorial counterpart of this statement goes as follows: assume that for every $n$ at most $2^{0.99n}$ strings of length~$n$ are chosen arbitrarily and declared ``forbidden''. Then there exist an infinite sequence $\alpha$ and some constant $c$ such that $\alpha$ does not contain forbidden substrings of length greater than $c$. This combinatorial statement can be derived from Levin's lemma (see~\cite[Section~8.5]{suv} for details), but this derivation does not give any specific value for $c$. Still we have a purely combinatorial statement, and it would be desirable to provide a specific bound. How can we achieve this?

\begin{itemize}
  \item We can fix some universal algorithm in the definition of Kolmogorov complexity and compute  specific constants in different theorems about Kolmogorov complexity.  Then the translation of Kolmogorov complexity argument will give some specific~$c$. This looks feasible, as the efforts in this direction (discussed above) show. But this will give some huge $c$ (recall that even after all the tricks to choose a concise programming language the constants turn out to be quite big).

  \item Another possibility is to translate the proof into another language that does not use Kolmogorov complexity explicitly, or just find some other proof. Here it is much more difficult than for the example above (where Kolmogorov complexity can be replaced by a counting argument in a straightforward way) but it is still possible. Several other proofs of the combinatorial statement were suggested (using Lovasz local lemma, or some kind of potential, or some advanced counting), and it turned out that a more general question was studied in algebra under the name of Golod--Shafarevich theorem. See~\cite[Section 8.5]{suv} and \cite{shen2018} for these proofs and historical references. There is even a ``tetris'' proof (we add random bits and cancel forbidden sequences) that uses Kolmogorov complexity, but in a different way that does not introduce unspecified constants.
\end{itemize}

\section{How to decrease constants?}

Still there are some cases when we do not know a proof of the combinatorial statement that does not use Kolmorogov complexity. An example of this type is aggregating opinions and a related combinatorial game described in~\cite{shen2024,kss2025}. We do not go into details here, but for this case the asymptotic bound obtained by a Kolmogorov complexity argument is better than the bound provided by standard tools of learning theory ($t^{1/2}$ instead of $t^{2/3}$).  Still the Kolmogorov complexity argument does not provide specific constants in the asymptotic bounds and does not provide a computationally efficient aggregating algorithm. The second problem seems unavoidable if we use Kolmogorov complexity, but one can indeed try to get some specific constants by fixing some universal (or optimal) algorithm and computing explicit constants in the theorems about complexity that are used in the proof.

There are two possible approaches. The first one is to use some specific programming language (e.g., combinatory logic language developed by Tromp, see above) and try to find explicit constants for statements about complexity used in the proof (there are a lot of them). This is more a programming challenge than a mathematical one.

Another possible approach worth trying is to start with some universal language and then try to modify it to get small constants in the statements used in the proof. For example, it may happen that we need the bound $\KS(A(x))\le \KS(x)+c$ for some specific algorithm $A$, all $x$ and some reasonably small constant $c$. Then we can decide to modify our initial optimal algorithm $U$ and get a new algorithm $U'$ as follows: $U'(1x)=U(x)$ and $U'(0x)=A(U'(x))$. Then the complexity function increases at most by $1$ (due to the first part of the definition) and $\KS_{U'}(A(x))\le \KS_{U'}(x)+1$ (due to the second part). One can perform similar modifications to get a bound for $\KS(xy)$ with small constant, etc. To get the reverse direction in the formula for pair complexity, one needs to use the fixed-point theorem (since the construction of the optimal algorithm needs to refer to itself).

Let us provide more details. In the following proposition we speak about conditional complexity $\KS_U(x\cnd y)$, and the unconditional complexity is defined by $\KS_U(x)=\KS_U(x\cnd \varepsilon)$ where $\varepsilon$ is an empty string. We fix some computable bijective pairing function $\langle x,y\rangle$. 
\begin{proposition}
There exist an optimal algorithm $U$ for conditional complexity such that the corresponding conditional and unconditional complexity functions satisfy the following inequalities:
\begin{itemize}
\item $\KS(x\cnd y)\le \KS(x)+10;$
\item $\KS(x\cnd x)\le 10;$
\item $\KS(f_1(x,y)\cnd y)\le \KS(x\cnd y)+10;$
\item $\KS(x\cnd y)\le \KS(x\cnd g_1(y))+10;$
\item $\KS(f_2(x,y)\cnd y)\le \KS(x\cnd y)+10;$
\item $\KS(x\cnd y)\le \KS(x\cnd g_2(y))+10;$
\item \ldots 
\item $\KS(\langle x,y\rangle\cnd z) \le \KS(x\cnd z)+2\log \KS(x\cnd z)+\KS(y\cnd \langle x,z\rangle)+20;$
\item $\KS(x\cnd z)+\KS(y\cnd\langle x,z\rangle)\le \KS(\langle x,y\rangle \cnd z)+6\log \KS(\langle x,y\rangle\cnd z)+30.$
\end{itemize}
\end{proposition}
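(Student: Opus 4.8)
The plan is to build the optimal algorithm $U$ by a fixed-point (Kleene recursion) construction, designing its behavior on various prefixes so that each of the listed inequalities is witnessed by a short "opcode" prepended to an existing $U$-description. Start from any optimal conditional algorithm $U_0$. I want $U$ to interpret its program $p$ (given condition $y$) by reading a short header that selects one of finitely many actions, each action reducing to a recursive call of $U$ on a shorter program and a possibly modified condition. Concretely: a leading block such as $111\ldots$ means "run $U_0$", so $U$ remains optimal; a block meaning "identity" gives $U(\mathrm{op}\,\varepsilon, x) = x$, yielding $\KS(x\cnd x)\le 10$ (here the program is just the opcode, the condition supplies $x$); a block meaning "weaken the condition" sets $U(\mathrm{op}\,p, y) = U(p,\varepsilon)$, giving $\KS(x\cnd y)\le \KS(x)+10$; blocks for $f_i$ set $U(\mathrm{op}\,p,y)=f_i(U(p,y),y)$, and blocks for $g_i$ set $U(\mathrm{op}\,p,y)=U(p,g_i(y))$. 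Choosing the header length (a few bits per opcode, all absorbed into the constant $10$) handles every inequality except the last two.

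For the two pairing inequalities I would reserve two more opcodes. The upper bound $\KS(\langle x,y\rangle\cnd z)\le \KS(x\cnd z)+2\log\KS(x\cnd z)+\KS(y\cnd\langle x,z\rangle)+20$ is witnessed by an opcode followed by a self-delimited encoding of a shortest program $p$ for $x$ given $z$ (length $|p|+2\log|p|+O(1)$), followed by a shortest program $q$ for $y$ given $\langle x,z\rangle$; the new $U$, on this input with condition $z$, recursively computes $x=U(p,z)$, then $y=U(q,\langle x,z\rangle)$, then outputs $\langle x,y\rangle$. The reverse inequality $\KS(x\cnd z)+\KS(y\cnd\langle x,z\rangle)\le \KS(\langle x,y\rangle\cnd z)+6\log\KS(\langle x,y\rangle\cnd z)+30$ is the delicate one: it is exactly the nontrivial direction of symmetry of information, and it cannot be obtained by a mere prefix trick because it bounds $\KS(x\cnd z)$ and $\KS(y\cnd\langle x,z\rangle)$ from above in terms of $\KS(\langle x,y\rangle\cnd z)$, not the other way round. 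Here I would bake the standard proof of $\le^+$ symmetry of information into $U$ itself. Using the recursion theorem, $U$ knows its own code, hence can enumerate, for the given $z$, the set of pairs $\langle a,b\rangle$ with $\KS_U(\langle a,b\rangle\cnd z)\le m$ for a guessed value $m$; from the position of $\langle x,y\rangle$ among those with first coordinate $x$ it extracts $y$ from a program of length about $m-\KS_U(x\cnd z)$ (this is the usual counting argument), and it extracts $x$ itself from a shortest program given $z$ together with $m$. The opcode then tells $U$ to run this decoding; providing $m$ self-delimitedly costs $2\log m$, and running the argument twice (once to pin down $\KS_U(x\cnd z)$) roughly accounts for the $6\log$ term, with all remaining overhead inside the additive $30$.

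The main obstacle is the circularity in the last inequality: the quantity $\KS_U(x\cnd z)$ that appears in the decoding of $y$ is defined in terms of $U$, which is the object being constructed. This is precisely what the recursion theorem is for — we write a program transformation $e\mapsto \Phi(e)$ that, assuming $\Phi_e$ equals the algorithm under construction, implements all the opcodes above (the simple ones verbatim, the symmetry-of-information one by the enumeration-and-counting procedure using $\Phi_e$'s own complexity function), and then take a fixed point $U=\Phi_e$ with $\Phi_e\simeq\Phi(\Phi(e))$. One must check that the fixed point really is optimal (immediate from the "run $U_0$" opcode), that the prefix structure of the opcodes does not let one action's programs collide with another's (use a fixed prefix-free set of opcodes), and that each decoding procedure terminates whenever its guessed parameters are correct, so that the claimed short programs genuinely exist. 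The logarithmic terms need to be tracked with a little care — in particular whether $2\log$ versus $6\log$ is what the double application of the counting argument actually yields — but this is routine bookkeeping, not a conceptual difficulty; the constants $10,20,30$ are deliberately loose to absorb all the opcode and encoding overhead.
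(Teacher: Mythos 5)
Your proposal is correct and follows essentially the same route as the paper's proof: a fixed-length prefix-free set of opcodes each implementing one inequality by a recursive call to $U$, a self-delimiting encoding for the easy direction of the pair formula, and the Kleene recursion theorem to let $U$ enumerate its own complexity function and run the standard counting argument for the hard direction of symmetry of information. The only point to tighten is the recovery of $x$ in that last step: it must come from its index among the at most $2^{n-m+1}$ strings $x'$ having at least $2^m$ partners $y'$ with $\KS_U(\langle x',y'\rangle\cnd z)\le n$ (not from ``a shortest program for $x$ given $z$,'' which would be circular), exactly as in the paper's opcode $\textbf{C}_{301}$.
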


The missing lines can contain, say, hundred different computable functions $f_1,\ldots$  and $g_1,\ldots$ (there is some slack in the constants, $10$ can be optimized) --- depending on what is needed for the applications. For example, one of this $f_i$ can be the function $\langle x,y\rangle\mapsto \langle y,x\rangle$, and then we have $\KS(\langle x,y\rangle)\le \KS(\langle y,x\rangle)+10$ (in the unconditional version, when the condition is $\varepsilon$).  Two last lines correspond to the formula for the complexity of pairs.

\begin{proof}
\leavevmode
 Let $\textbf{C}_0,\textbf{C}_1,\ldots$ be different string constants of length $10$. Let $U_0$ be some optimal algorithm for conditional complexity. Let
\begin{itemize}
\item $U(\textbf{C}_0p,y)=U_0(p,y);$
\item $U(\textbf{C}_1p,y)=U(p,\varepsilon);$
\item $U(\textbf{C}_2p,y)=y;$
\item $U(\textbf{C}_3p,y)=f_1(U(p,y),y);$
\item $U(\textbf{C}_4p,y)=U(p,g_1(y));$
\item \ldots
\end{itemize}
  The first line guarantees that $\KS_U(x\cnd y) \le \KS_{U_0}(x\cnd y)+10$: if $p$ was a description of $x$ with condition $y$ with respect to $U_0$, then $\textbf{C}_0p$ is a description of $x$ with condition $y$ with respect to $U$.

The second line ensures that $\KS_U(x\cnd y)$, for each $y$, exceeds $\KS_U(x\cnd \varepsilon)$ at most by $10$. Note that we have $U$ in the right hand side, so this definition is recursive.

The third line implies that $\KS_U(x\cnd x)\le 10$ for every $x$. 

Similarly, the two next lines guarantee that $\KS(f_1(x,y)\cnd y)\le \KS(x\cnd y)+10$ and $\KS(x\cnd y)\le \KS(x\cnd g_1(y))+10$. Let us check it for the first line. Assume that some $p$ is the shortest $U$-description of $x$ given $y$. Then $U(p,y)=x$, and $U(\textbf{C}_3p,y)=f_1(U(p,y),y)=f_1(x,y)$, so $\textbf{C}_3p$ is a $U$-description of $f_1(x,y)$ with condition $y$, and $\KS_U(f_1(x,y)\cnd y)\le |p|+10=\KS_U(x\cnd y)+10$.

Next line: assume that some $p$ is the shortest $U$-description of $x$ with condition $g_1(y)$, i.e., $U(p,g_1(y))=x$. Then $U(C_4p, y)= U(p,g_1(y))=x$, so $\KS_U(x\cnd y)\le |p|+10 = \KS_U(x\cnd g_1(y))+10$.

Now we have to consider the formula for the complexity of pairs in both directions. For that we need some prefix-free encoding for natural numbers and strings. It is easy to check that there exists a prefix-free encoding $n\mapsto e(n)$ for natural numbers (this means that none of $e(i)$ is a prefix of another one) such that $|e(n)|\le 2\log_2 n + 2$.  (We assume in our formulas that $\log 0 =0$, not $-\infty$.) For example, one may note that 
\[
\sum 2^{-2\lfloor\log n\rfloor +2} = 1\cdot\frac{1}{4}+1\cdot\frac{1}{4}+2\cdot\frac{1}{16}+4\frac{1}{64}+\ldots \le 1
\]
and use Kraft--Chaitin's lemma. Then we get the prefix-free encoding for strings (denoted by the same letter $e$):
\[
e(x) = e(|x|)x
\]
such that $|e(x)|\le |x| + 2 \log |x|+ 2$.

Now the formula for pair complexity in one direction is easy: we let
\[
U(\textbf{C}_{300} e(p) q, z)=\langle U(p,z), U(q,\langle U(p,z),z\rangle) \rangle.
\]
If $p$ is a description for $x$ given $z$, we have $U(p,z)=x$. If $q$ is a description of $y$ given $\langle x,z\rangle$, then  $U(q,\langle x,z\rangle)=y$. Then
\[
U(\textbf{C}_{300} e(p) q, z)=\langle x, U(q,\langle x,z\rangle) \rangle = \langle x, U(q,\langle x,z\rangle)\rangle = \langle x,y\rangle,
\]
so $\KS_U(\langle x,y\rangle\cnd z)\le |\textbf{C}_{300}e(p)q|\le 10+|e(p)|+|q|\le |p| + 2\log |p| + |q| +20$.

For the other direction we need to use the Kleene fixed point theorem and the corresponding type of recursive definitions: we describe $U$ that uses the program for $U$ as if it were known. 

  Recall the proof of the formula. For simplicity we consider the unconditional case without $z$, for the general case condition $z$ is added everywhere. We let $n$ be $\KS(x,y)$ and then consider the set of all pairs $\langle x,y\rangle$ that have complexity at most $n$. Among them we consider pairs that have the first coordinate equal to $x$ (and some other second coordinate). Assume that the number of those pairs is between $2^{m}$ and $2^{m+1}$. Then the complexity of $y$ given $x$ does not exceed $m+O(\log n)$, since $y$ can be specified by $x$, $n$, and the ordinal number of $y$ in the enumeration of selected pairs. On the other hand, the complexity of $x$ does not exceed $n-m+O(\log n)$, since $x$ can be specified by $n$, $m$ and the ordinal number of $x$ in the enumeration of all $x$ such that there are at least $2^{m}$ selected pairs with the first component $x$ (there are at most $2^{n-m+1}$ strings $x$ with this property). 

To implement this argument, we need to have to include in $U$ the two encodings used in the argument. Namely, 
\begin{itemize}
\item we let $U(\textbf{C}_{301} e(n)e(m)u, z)$, where $n\ge m$ are natural numbers, $u$ is a binary string considered as a binary representation of a number, and $z$ is a condition, to be the $u$th element in the enumeration of all $x$ such that there exists at least $2^{m}$ strings $y$ such that $\KS_U(\langle x,y\rangle\cnd z)\le n$;
\item we let $U(\textbf{C}_{302} e(n)u, \langle x,z\rangle)$ to be the $u$-th element in the enumeration of all $y$ such that $\KS_U(\langle x,y\rangle\cnd z)\le n$
\end{itemize}
Note that this construction assumes that the program for  $U$ is known when we define $U$ (so we can enumerate the objects in question in some natural order), and the Kleene fixed-point (recursion) theorem is needed to justify this kind of definition. (It the previous constructions we used only values of $U$ on smaller arguments, so it is was just a recursive algorithm.)

  Now, if $\KS_U(\langle x,y\rangle\cnd z)=n$ and the number of $y'$ such that $\KS(\langle x,y'\rangle\cnd z)\le n$ is in the interval $(2^{m},2^{m+1}]$ (note that $m\le n$), then $x$ has description (with condition $z$) of size $|C_{301}|+e(n) + e(m) + (n-m+1) \le n-m+4\log n +5$ and $y$ has description (with condition $\langle x,z\rangle$) of length $|C_{302}|+e(n)+m+1\le m+2\log n + 3$. The sum of lengths is, therefore, at most $20+n+6\log n+8$, and this is a bit better than required.
\end{proof}

\begin{question}
What kind of results gives this approach for the combinatorial applications like aggregation of opinions? 
 \end{question}
 
 \begin{center}
      \large $*$ \quad $*$ \quad $*$
 \end{center}
 
 The author are grateful to all people who insisted on the importance of more precise measuring of complexity, especially to Yury Gurevich, Vladimir A.~Uspensky, Alexey Semenov and Modest Waintswaig.


\begin{thebibliography}{9}

\bibitem{solomonoff1964-1}
R.J.~Solomonoff, A Formal Theory of Inductive Inference. Part I. \emph{Information and Control}, \textbf{7}, 1--22 (1964). 

\bibitem{kolmogorov1965}
A.N.~Kolmogorov, Three approaches to the quantitative definition of information, \emph{Problemy Peredachi Informatsii}, \textbf{1}(1), 3--11. [In Russian. Translated in \emph{International Journal of Computer Mathematics}, \textbf{2}, 157--168 (1968).]

\bibitem{chaitin1966}
Gregory J.~Chaitin, On the Length of Programs for Computing Finite Binary Sequences, \emph{Journal of the ACM}, \textbf{13}(4), 547--569 (1966)

\bibitem{chaitin1969}
Gregory J.~Chaitin, On the Length of Programs for Computing Finite Binary Sequences: Statistical Considerations, \emph{Journal of the ACM}, \textbf{16}(1), 145--159 (1969)

\bibitem{levin1977}
L.A.~Levin, On a concrete method of assigning complexity measures, \emph{Soviet Math. Dokl.}, \textbf{18}(1977), No.~3, 727--731.

\bibitem{chaitin1977}
G.J.~Chaitin, \emph{Algorithmic information theory}, Cambridge Tracts in Theoretical Computer Science 1, Cambridge University Press, 1987.

\bibitem{muchnik-positselsky2002}
A.~Muchnik, S.~Positselsky, Kolmogorov entropy in the context of computability theory, \emph{Theoretical Computer Science}, \textbf{271}(1--2), 15--35 (2002), \url{https://doi.org/10.1016/S0304-3975(01)00028-7}.

\bibitem{durand-levin-shen2007}
B.~Durand, L.~Levin, A.~Shen,  Complex tilings, \emph{Journal of Symbolic Logic}, \textbf{73}(2), 593--613 (2007), see also~\url{https://arxiv.org/abs/cs/0107008}.

\bibitem{chaitin-collection2007} C.~Calude, editor,  \emph{Randomness \& Complexity, from Leibniz to Chaitin}, World Scientific, Singapore (2007), 468 pp., \url{https://doi.org/10.1142/6577}

\bibitem{delahaye-zenil2007}
J.-P.~Delahaye, H.~Zenil, \emph{On the Kolmogorov--Chaitin Complexity for short sequences}, version 1 ( 2007), \url{https://arxiv.org/pdf/0704.1043v1}; version 5 (2010), \url{https://arxiv.org/pdf/0704.1043v5}, also in~\cite[Chapter 6, 123--129]{chaitin-collection2007}.

\bibitem{tromp2007}
John Tromp, Binary Lambda Calculus and Combinatory Logic. In~\cite[Chapter 14, p.~237--260]{chaitin-collection2007}.

\bibitem{tromp}
John Tromp, \emph{Functional Bits: Lambda Calculus based Algorithmic Information Theory}, \url{https://tromp.github.io/cl/LC.pdf} (2023), see also \url{https://tromp.github.io/cl/cl.html}.

\bibitem{nies2009}
A.~Nies, \emph{Computability and Randomness}, Oxford University Press, 2009, ISBN 978-0-19-923076-1.

\bibitem{downey-hirschfeldt2010}
R.~Downey, D.~Hirschfeldt, \emph{Algorithmic Randomness and Complexity}, Springer, 2010, \url{http://link.springer.com/10.1007/978-0-387-68441-3}

\bibitem{gurevich-passmore2012}
Yuri Gurevich, Grant O.~Passmore, Impugning Randomness, Convincingly, \emph{Studia Logica}, \textbf{100}~(1/2), 193--222 (2012), url{https://www.jstor.org/stable/41475223}.

\bibitem{suv}
A.~Shen, V.A.~Uspensky, N.~Vereshchagin, Kolmogorov Complexity and Algorithmic Randomness, American Mathematical Society, 2017, \url{https://www.lirmm.fr/~ashen/kolmbook-eng-scan.pdf}

\bibitem{shen2018} 
A.~Shen, \emph{Compressibility and probabilistic proof}, \url{https://arxiv.org/abs/1703.03342}. (Short version in:  \emph{Unveiling Dynamics and Complexity - 13th Conference on Computability in Europe, CiE 2017, Turku, Finland, June 12--16, 2017, Proceedings}, Lecture Notes in Computer Science, \textbf{10307}, 101--111, Springer, 2017.

\bibitem{li-vitanyi2019} Ming Li, Paul Vit\'anyi, \emph{An Introduction to Kolmogorov Complexity and Its Applications}, 4th edition, Springer, 2019. (Previous editions: 1993, 1997, 2008.)

\bibitem{shen2024}
A.~Shen, Kolmogorov Complexity as a Combinatorial Tool, \emph{Twenty Years of Theoretical and Practical Synergies --- 20th Conference on Computability in Europe, CiE 2024, Amsterdam, The Netherlands, July 8--12, 2024, Proceedings}, Lecture Notes in Computer Science, \textbf{14773}, 27--31.

\bibitem{kss2025}
A.~Kozachinskiy, A.~Shen, T.~Steifer, \emph{Optimal bound for dissatisfaction in perpetual voting}, \url{https://arxiv.org/abs/2501.01969}. Short version: \emph{AIII-25 Proceedings}, AAAI Press, 13977--13984,  \url{https://doi.org/10.1609/aaai.v39i13.33529}.

\end{thebibliography}
\end{document}